\documentclass[
  12pt,
]{article}
\usepackage{amsmath,amssymb}
\usepackage{amsthm}

\newtheorem{lemma}{Lemma}
\usepackage{setspace}
\usepackage{xcolor}
\usepackage[margin=1in]{geometry}
\usepackage{longtable,booktabs,array}
\usepackage{graphicx}
\usepackage{booktabs}
\usepackage{longtable}
\usepackage{array}
\usepackage{multirow}
\usepackage{algorithm}
\usepackage{algpseudocode}
\usepackage{etoolbox}

\makeatletter
\newcounter{stagecounter}

\newcommand{\ResetSteps}{\setcounter{stagecounter}{0}}

\makeatletter
\newenvironment{breakablealgorithm}
  {% \begin{breakablealgorithm}
   \begin{center}
     \refstepcounter{algorithm}% New algorithm
     \hrule height.8pt depth0pt \kern2pt% Top rule
     \long\def\caption##1{% Explicit inner caption tracking
       \vskip2pt\small\textbf{Algorithm~\thealgorithm} ##1\par
       \vskip2pt\hrule height.8pt depth0pt \kern2pt
     }
  }
  {% \end{breakablealgorithm}
     \kern2pt\hrule height.8pt depth0pt
   \end{center}
  }
\makeatother

\usepackage{caption}
\usepackage[]{natbib}
\title{A Unified Three-Stage Weighting Framework for Causal Inference and Mediation Analysis under Case--Control Sampling}
\author{Tarikul Islam and Mahbub A.H.M. Latif\footnote{Corresponding author, Email: \texttt{mailto:mlatif@isrt.ac.bd}}\\~\\
Institute of Statistical Research and Training (ISRT)\\
University of Dhaka, Dhaka-1000, Bangladesh}
\date{}

\begin{document}
\maketitle

\setstretch{1.5}
\bigskip

\centerline{\bf \large Abstract}

\emph{Case--control studies are widely used in epidemiology and biomedical research because they provide substantial efficiency gains when outcomes are rare or prospective follow-up is impractical. However, retrospective outcome-dependent sampling distorts the population outcome distribution, creating fundamental challenges for causal inference. While several methods have been proposed to estimate causal effects from case--control data, most require prior knowledge of the population outcome prevalence, rely on restrictive modeling assumptions, or focus exclusively on total effects. These limitations become particularly problematic when prevalence information is unavailable and when investigators seek to understand causal mechanisms through mediation analysis.}

\emph{We propose a unified three-stage weighting (3S-weighting) framework for causal inference and causal mediation analysis from case--control studies. The proposed approach first estimates the unknown population outcome prevalence using density-ratio learning and label-shift correction combined with externally available covariate information. Next, prevalence-based design weights are used to reconstruct the target population distribution from the retrospective sample. Finally, stabilized causal and mediation weights are applied within a marginal structural modeling framework to estimate total and pathway-specific causal effects, including the pure direct effect, pure indirect effect, and interaction effect.}

\emph{The proposed methodology does not require prior knowledge of the population outcome prevalence and uses only externally available marginal covariate information. Simulation studies demonstrate that conventional analyses that ignore retrospective sampling can produce substantial bias in both total and mediation effect estimates, whereas the proposed approach consistently recovers the target population causal parameters across a range of sampling scenarios. An application to data from the National Health and Nutrition Examination Survey further illustrates the practical implementation and utility of the proposed framework.}

\emph{By integrating prevalence recovery, population reconstruction, and causal weighting within a single framework, the proposed method substantially expands the scope of causal questions that can be addressed using case--control data. The framework provides a practical and theoretically grounded approach for estimating both total and mechanistic causal effects from outcome-dependent samples.}

\bigskip

\textbf{Keywords}: Case-control design, three-stage approach.

\bigskip

\section{Introduction}\label{introduction}

Case--control studies remain one of the most widely used observational study designs in epidemiology, public health, and biomedical research. Their enduring popularity stems from substantial gains in efficiency when outcomes are rare, disease incidence is low, or prospective follow-up is prohibitively expensive. By selecting individuals conditional on outcome status, case--control studies dramatically reduce data collection costs while retaining much of the information required to investigate disease etiology. Consequently, they have played a central role in identifying risk factors for cancer, cardiovascular disease, infectious diseases, and numerous chronic health conditions \citep{cornfield1951method, miettinen1976estimability, breslow1980statistical, rothman2008modern, knol2008case, labrecque2021case}. More broadly, retrospective sampling designs constitute a cornerstone of modern epidemiologic research and continue to be widely employed in both resource-constrained and large-scale population studies \citep{schlesselman1982case, breslow1996statistics}.

Despite their practical importance, statistical inference from case--control studies differs fundamentally from inference based on cohort or cross-sectional data. Because sampling depends on the observed outcome, the resulting data do not preserve the population outcome distribution. The observed proportion of cases is determined by design rather than by the true disease prevalence in the target population. This outcome-dependent sampling mechanism complicates estimation of population-level quantities and creates important challenges for causal inference. While certain conditional associations remain identifiable under retrospective sampling, marginal quantities that depend on the population risk distribution are generally not recoverable from the observed data without additional assumptions or external information \citep{prentice1979logistic, greenland1981multivariate, robins1994estimation, l2022identification}. From a missing-data perspective, retrospective sampling induces a selection mechanism that systematically distorts the outcome distribution relevant for identifying marginal causal parameters \citep{robins1994estimation, hernan2010causal}.

Historically, methodological developments for case--control studies focused primarily on estimation of odds ratios. Seminal theoretical work established that retrospective sampling preserves key likelihood relationships, allowing valid estimation of exposure effects through logistic regression without explicit modeling of the sampling mechanism \citep{cornfield1951method, anderson1972logistic, prentice1979logistic}. Subsequent developments clarified the conditions under which odds ratios, incidence-density ratios, risk ratios, and absolute risks may be estimated or approximated from retrospective data \citep{miettinen1976estimability, greenland1987estimation, breslow1996statistics}. As a result, logistic regression became the dominant analytic tool for case--control studies and remains one of the most widely used models in epidemiologic research \citep{hosmer2013applied}. Nevertheless, routine analyses often emphasize conditional odds ratios without explicitly considering the causal estimand implied by the sampling design or its relationship to population-level intervention effects \citep{knol2008case, labrecque2021case}.

Over the last two decades, causal inference has undergone remarkable methodological development. Within the potential outcomes framework, causal effects are typically formulated as contrasts between counterfactual outcomes under alternative treatment assignments \citep{splawa1990application, rubin1974estimating, robins1986new, robins2000marginal, hernan2010causal}. Under assumptions of consistency, conditional exchangeability, and positivity, population-level causal estimands such as the average treatment effect, risk difference, relative risk, and causal odds ratio can be identified from observational data and estimated using inverse probability weighting, g-computation, doubly robust estimation, targeted maximum likelihood estimation, and related semiparametric methods \citep{robins2000marginal, bang2005doubly, van2006targeted, tsiatis2006semiparametric, imbens2015causal}. These approaches are fundamentally population-based and rely on the observed sample preserving the target population distribution. Consequently, their direct application to case--control data is generally invalid because retrospective sampling distorts the marginal outcome distribution required for identification of many causal estimands \citep{robins1994estimation, van2008estimation, l2022identification}.

Recognizing this limitation, a growing body of literature has sought to extend modern causal inference methods to retrospective sampling designs. Early contributions demonstrated that when population outcome prevalence is known, appropriately weighted estimators can recover marginal causal risk parameters from case--control samples \citep{robins1994estimation, van2008estimation, rose2008simple}. Subsequent work introduced targeted maximum likelihood estimators, doubly robust procedures, and semiparametric efficient estimators for causal odds ratios, risk ratios, and risk differences under outcome-dependent sampling \citep{rose2009match, rose2014double, balzer2016estimating, pearl2016targeted}. More recently, formal identification analyses have clarified the assumptions under which causal effects can be recovered from case--control, case-base, survivor, and risk-set sampling schemes, thereby strengthening the theoretical foundations of causal inference under retrospective designs \citep{l2022identification, o2022case}. Collectively, these developments demonstrate that retrospective sampling need not preclude causal inference, provided that the sampling process is appropriately incorporated into the identification and estimation strategy.

Nevertheless, several important limitations remain. First, many existing approaches require knowledge of the population disease prevalence or sampling fractions \citep{robins1994estimation, van2008estimation, rose2008simple}. In practice, such information is frequently unavailable, estimated with substantial uncertainty, or obtained from external sources that may not accurately represent the target population. Failure to account for this uncertainty may propagate bias into both point estimation and statistical inference. Second, several mediation methods developed for case--control studies rely on rare disease approximations or strong parametric assumptions whose validity can be difficult to assess empirically \citep{vanderweele2010odds, valeri2013mediation, vanderweele2016mediation}. Third, most existing methods focus on estimation of a single causal effect, typically the total effect, rather than providing a unified framework capable of simultaneously estimating pathway-specific mediation effects. Finally, recent reviews have concluded that despite growing methodological interest, practical implementation of causal inference methods for case--control studies remains limited and important methodological gaps persist \citep{mesidor2026use}.

A particularly important unresolved challenge concerns identification of population-level causal mediation effects from retrospective data. Causal mediation analysis seeks to decompose the total effect of an exposure into mechanistic pathways operating directly and indirectly through intermediate variables \citep{robins2000marginal, vanderweele2015explanation}. Such decompositions are increasingly used in epidemiology, social science, and health policy research to understand underlying causal mechanisms rather than merely quantify overall associations. However, existing mediation approaches for case--control studies often rely on rare outcome assumptions, known prevalence values, or highly structured parametric models \citep{vanderweele2010odds, vanderweele2016mediation, abdollahpour2021estimating}. Consequently, a general and practically implementable framework for population-level mediation analysis under retrospective sampling remains largely unavailable.

This article proposes a unified three-stage weighting (3S-weighting) framework for causal inference and causal mediation analysis from case--control data. The central idea is to reconstruct the target population distribution through a sequence of weighted pseudo-populations. In the first stage, population outcome prevalence is estimated from retrospective data using density-ratio learning and label-shift correction combined with externally available covariate information. In the second stage, design-based weights are used to reconstruct the target population outcome distribution. In the third stage, stabilized causal and mediation weights are applied to estimate total and pathway-specific causal effects using standard marginal structural modeling techniques.

The proposed framework contributes to the literature in several important ways. First, unlike existing prevalence-weighting approaches, it does not require prior knowledge of the population disease prevalence. Second, it uses only externally available marginal covariate information, which is considerably more accessible in practice than complete population outcome information. Third, the framework provides a unified strategy for estimation of both total and mediation effects under retrospective sampling. Fourth, the proposed approach is modular and can be combined with standard weighting-based causal estimators without requiring specialized outcome-dependent likelihood formulations. Finally, the framework naturally extends to longitudinal and dynamic treatment settings involving time-varying confounding and retrospective sampling.

The remainder of the paper is organized as follows. Section 2 introduces the causal framework and discusses the identification challenges arising from retrospective sampling. Section 3 presents the proposed three-stage weighting methodology and establishes the corresponding identification results. Section 4 evaluates finite-sample performance through extensive simulation studies. Section 5 illustrates the proposed approach using data from the National Health and Nutrition Examination Survey (NHANES). Section 6 concludes with a discussion of limitations, practical considerations, and future research directions.

\section{Causal Inference from Case--Control Designs}\label{causal-inference-from-casecontrol-designs}

\subsection{Notation and Overview}\label{notation-and-overview}

Consider a target population characterized by the random vector \(O=(\pmb{X},A,M,Y),\) where \(\pmb{X}\) denotes a vector of pre-treatment covariates, \(A\) is a binary treatment or exposure variable, \(M\) is a mediator, and \(Y\) is the outcome of interest. Let \(f(\pmb{x},a,m,y)\) denote the joint distribution of \(O\) in the target population.

Throughout this paper, we assume that the observed data arise from a case--control sampling design. Specifically, individuals are sampled conditional on the realized outcome value \(Y\), such that the probability of inclusion depends on disease status. Let \(S\) denote the sampling indicator, where \(S=1\) if an individual is selected into the study and \(S=0\) otherwise. Under a general case--control design, \(P(S=1\mid Y,A,M,\pmb{X}) = P(S=1\mid Y),\) implying that sampling is outcome-dependent but conditionally independent of the remaining variables given the outcome.

For the \(i\)th sampled individual, we observe \(O_i=(\pmb{X}_i,A_i,M_i,Y_i), \, i=1,\ldots,n,\) drawn independently from the retrospective distribution \(f(\pmb{x},a,m,y\mid S=1)\). Our inferential target, however, is not the retrospective distribution observed in the sample but rather the causal structure of the target population distribution \(f(\pmb{x},a,m,y)\). Consequently, all causal estimands considered in this article are defined with respect to the population distribution and not with respect to the sampled case--control distribution.

To formalize causal effects, we adopt the potential outcomes framework
\citep{splawa1990application, robins1986new, hernan2010causal}. Let \(Y(a)\) denote the potential outcome that would be observed under treatment level \(A=a\), and let \(M(a)\) denote the potential mediator under treatment level \(a\). Further, let \(Y(a,m)\) denote the nested potential outcome that would be observed if treatment were set to \(a\) and the mediator were intervened upon and fixed at value \(m\).

The primary causal estimand considered in this paper is the average treatment effect (ATE), \(\text{ATE} = E\{Y(1)-Y(0)\}\). In addition to the total causal effect, we consider causal mediation effects that decompose the treatment effect into distinct mechanistic pathways. Using the three-way decomposition framework of \citet{vanderweele2014decomposition}, the total effect may be partitioned into components attributable to mediation and interaction. In particular, we focus on the \emph{Pure Direct Effect (PDE)}, representing the effect of treatment not operating through the mediator; the \emph{Pure Indirect Effect (PIE)}, representing the effect transmitted exclusively through the mediator; and the \emph{Interaction Effect (InE)}, representing the contribution arising from treatment--mediator interaction. Together, these quantities provide a mechanistic characterization of the causal pathway linking treatment, mediator, and outcome.

The central challenge addressed in this article is that these causal estimands are defined with respect to the target population distribution, whereas the observed data arise from a retrospective outcome-dependent sampling mechanism. Consequently, identification of population-level causal effects requires careful reconstruction of the population distribution from the sampled case--control data.

\subsection{Identification Challenges Under Case--Control Sampling}\label{identification-challenges-under-casecontrol-sampling}

Under standard cohort sampling, identification of causal effects relies on the observed-data distribution \(f(\pmb{x},a,m,y),\) which can be factorized as $$f(\pmb{x},a,m,y) = f(y\mid a,m,\pmb{x}) \times f(m\mid a,\pmb{x}) \times f(a\mid \pmb{x}) \times f(\pmb{x}).$$

Under assumptions of consistency, conditional exchangeability, and positivity, each component of this factorization can be estimated from the observed data, thereby enabling identification of causal estimands through the g-formula, inverse probability weighting, or doubly robust procedures
\citep{robins1986new, robins2000marginal, hernan2010causal}.

In contrast, under case--control sampling the observed distribution is \(f(\pmb{x},a,m,y\mid S=1),\) rather than the target population distribution \(f(\pmb{x},a,m,y).\)
Applying Bayes' theorem yields
\[
f(\pmb{x},a,m,y\mid S=1) = \frac{ P(S=1\mid y)\, f(\pmb{x},a,m,y)}{P(S=1)}.
\]
Consequently, the sampled distribution is a distorted version of the population distribution, with distortion determined by the outcome-dependent sampling probabilities.

A fundamental implication is that several population quantities are not nonparametrically identifiable from case--control data alone. In particular, \(f(y),\, f(\pmb{x}), \, f(a),$ and $f(y\mid a),\) cannot generally be recovered without additional information regarding the sampling mechanism or the population outcome prevalence. Although retrospective sampling preserves conditional odds-ratio structures, \(\text{OR}(Y,A\mid \pmb{X}),\) it does not preserve marginal risk distributions. This distinction is crucial for causal inference because many causal estimands are inherently marginal. For example, \(E\{Y(a)\} = \int E(Y\mid A=a,\pmb{X}=\pmb{x}) \,dF(\pmb{x}),\) depends explicitly on the population covariate distribution \(F(\pmb{x})\). Similarly, mediation estimands involve integration over both the population covariate distribution and the mediator distribution under hypothetical interventions. Because neither distribution is directly observed under retrospective sampling, identification becomes substantially more difficult.

The problem becomes even more pronounced when outcome prevalence is unknown. In such settings, infinitely many population distributions may generate the same observed case--control sample distribution, implying that causal risk parameters cannot be identified without additional assumptions or external information \citep{l2022identification}. Consequently, naive application of standard causal inference methods to case--control data may yield biased estimators of causal effects despite correct specification of treatment and outcome models.

\subsection{Existing Identification Approaches and Remaining Gaps}\label{existing-identification-approaches-and-remaining-gaps}

Recognition of the identification challenges induced by retrospective sampling has motivated a substantial methodological literature. Early work established that logistic regression coefficients and exposure odds ratios remain identifiable under case--control sampling because the retrospective and prospective likelihoods share a common parametric structure \citep{cornfield1951method, prentice1979logistic}. These results provided the theoretical foundation for routine epidemiologic analyses based on odds ratios.

Subsequent research extended beyond odds-ratio estimation and investigated identification of population-level causal parameters. One important line of work demonstrated that if the population outcome prevalence is known, inverse-probability weighting can be used to reconstruct the target population distribution and estimate marginal causal effects \citep{van2008estimation, rose2008simple}. Related developments introduced doubly robust estimators, targeted maximum likelihood estimation, and semiparametric efficient procedures for causal effect estimation under retrospective sampling \citep{rose2014double, balzer2016estimating, pearl2016targeted}.

More recently, formal identification analyses have clarified the conditions under which causal effects are identifiable under various outcome-dependent sampling schemes, including classical case--control, case-base, risk-set, and survivor sampling designs \citep{l2022identification, o2022case}. Collectively, these studies demonstrate that many causal estimands remain identifiable when sufficient information regarding prevalence, sampling fractions, or external population characteristics is available.

Despite these advances, several important methodological gaps remain. First, most weighting-based approaches assume that the population outcome prevalence is known exactly. In many practical applications, prevalence information is unavailable, estimated with substantial uncertainty, or obtained from external studies that may not be representative of the target population. Second, existing methods largely focus on estimation of total causal effects and provide limited support for mediation analysis under retrospective sampling. Available mediation approaches often rely on rare disease assumptions, strong parametric specifications, or specialized modeling frameworks that may be difficult to implement in practice \citep{vanderweele2010odds, vanderweele2016mediation, abdollahpour2021estimating}. Third, existing approaches generally treat prevalence estimation and causal effect estimation as separate problems. Relatively little attention has been devoted to developing unified frameworks that simultaneously address prevalence recovery, population reconstruction, and causal mediation analysis. Finally, recent methodological reviews have emphasized that practical uptake of causal inference methods for case--control studies remains limited despite considerable theoretical progress \citep{mesidor2026use}. This gap highlights the need for estimation strategies that are both theoretically rigorous and practically implementable using routinely available external information.

The methodology proposed in this article addresses these challenges through a three-stage weighting framework that first reconstructs the population outcome prevalence, then recovers the target population distribution, and finally estimates total and pathway-specific causal effects using standard weighting-based causal inference machinery.

\section{Proposed Three-Stage Weighting Framework for Causal Inference from Case-Control Designs}\label{proposed-three}

We propose a unified three-stage weighting (3S-weighting) framework for identifying and estimating causal estimands from retrospective case-control data. The proposed framework transforms the observed case-control sample into a sequence of pseudo-populations such that standard causal inference procedures become applicable despite outcome-dependent sampling. In particular, the proposed approach simultaneously addresses two major challenges inherent to case-control designs: (i) distortion of the marginal outcome distribution induced by retrospective sampling, and (ii) confounding bias arising from non-random treatment assignment.

Conceptually, the proposed procedure proceeds in three stages. In the first stage, we recover the target population outcome prevalence from the retrospective sample using density-ratio learning and label-shift correction. In the second stage, we construct a pseudo-population that restores the target population structure through design-based weighting. In the final stage, causal weighting methods are applied within the reconstructed pseudo-population to estimate total and mediation effects. Figure \ref{Figure_1} provides a schematic overview of the proposed framework.

\begin{figure}[!ht]
\centering
\includegraphics[width=.9\textwidth]{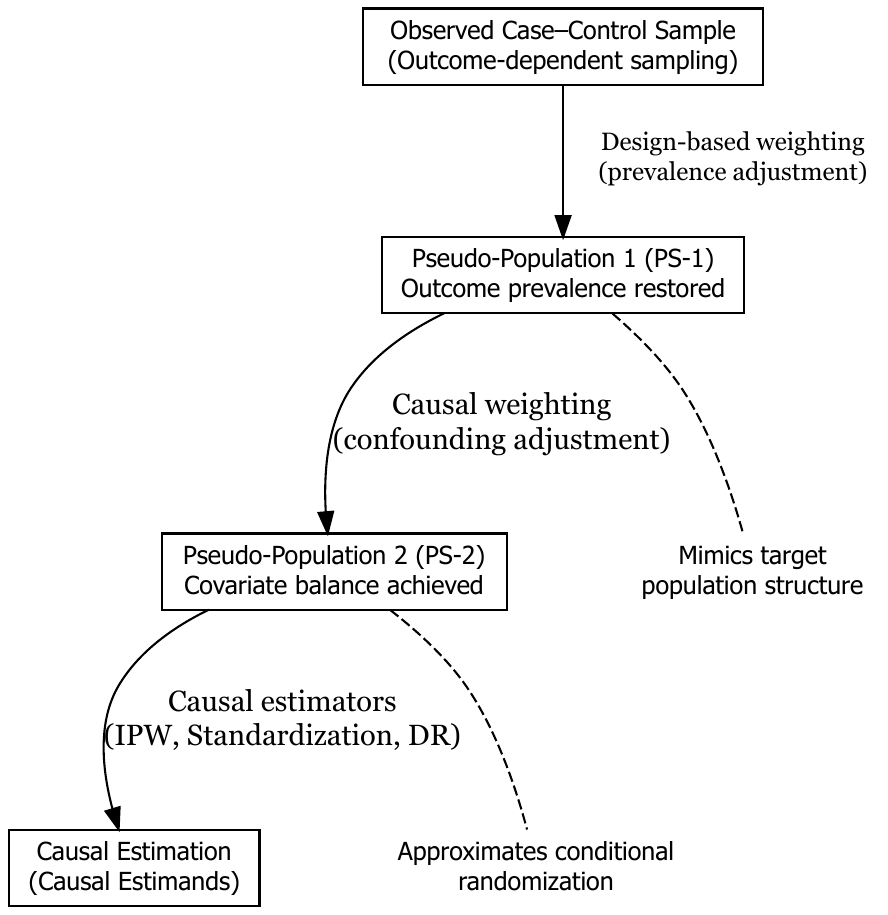}
\caption{
Conceptual overview of the proposed three-stage weighting framework. The observed retrospective case-control sample is first transformed into a pseudo-population with restored outcome prevalence through design-based weighting. A second pseudo-population is then constructed through causal weighting to approximate conditional randomization. Standard causal estimators are subsequently applied to estimate total and mediation effects.
}
\label{Figure_1}
\end{figure}

\vspace{1\baselineskip}

\begin{breakablealgorithm}
\caption{Three-Stage Weighting (3S-Weighting) Procedure for Causal Inference and Mediation Analysis from Case--Control Data}
\label{alg:3s_weighting}
\begin{algorithmic} 

\State \textbf{Input:} Case--control sample $\mathcal D_{cc} = \{(Y_i,A_i,M_i,\mathbf X_i):i=1,\ldots,n\}$ and externally available marginal population information on $\mathbf X$.
\State \textbf{Output:} Estimated prevalence $\hat{\pi}$, design weights, causal weights, and causal estimands (TE, PDE, PIE, InE).

\Statex 

\Statex \textbf{\underline{Stage 1: Population Reconstruction and Prevalence Estimation}} \ResetSteps
\begin{enumerate}%[label=\roman*.]
\renewcommand{\labelenumi}{(\roman{enumi})}
\item Generate a synthetic population sample $\mathcal D_{syn}$ from externally available marginal covariate distributions:
$$X_j^{(syn)} \sim P(X_j), \qquad j=1,\ldots,p.$$
\item Combine $\mathcal D_{syn}$ and $\mathcal D_{cc}$ and define:
$$Z = \begin{cases} 1,& \text{synthetic sample}, \\ 0,& \text{case--control sample}. \end{cases}$$
\item Fit a probabilistic classifier to obtain: $R(\mathbf X)=P(Z=1\mid \mathbf X).$
\item Estimate the density ratio: 
$$\hat r(\mathbf X) = \frac{\hat R(\mathbf X)}{1-\hat R(\mathbf X)}.$$
\item Fit an outcome regression model: $\hat S_{cc}(\mathbf X) = P(Y=1\mid \mathbf X).$
\item Construct the label-shift-adjusted population risk model:
$$\hat S_{pop}(\mathbf X;\pi) = \frac{\alpha(\pi)\hat S_{cc}(\mathbf X)}{1-\hat S_{cc}(\mathbf X)+\alpha(\pi)\hat S_{cc}(\mathbf X)}.$$
\item Define the mapping function:
$$T(\pi) = \frac{\sum_{i=1}^{n}\hat r(\mathbf X_i)\hat S_{pop}(\mathbf X_i;\pi)}{\sum_{i=1}^{n}\hat r(\mathbf X_i)}.$$
\item Obtain $\hat\pi=T(\hat\pi)$ using fixed-point iteration or root-finding.
\end{enumerate}
\Statex

\Statex \textbf{\underline{Stage 2: Reconstruction of the Target Population}} \ResetSteps
\begin{enumerate}%[label=\roman*.]
\renewcommand{\labelenumi}{(\roman{enumi})}
\item Construct design weights:
$$w_i^{(D)} = \begin{cases} \dfrac{\hat\pi}{\hat p_{cc}}, & Y_i=1, \\[0.6em] \dfrac{1-\hat\pi}{1-\hat p_{cc}}, & Y_i=0. \end{cases}$$
\item Form the weighted pseudo-population $\{w_i^{(D)},Y_i,A_i,M_i,\mathbf X_i\}$, whose marginal outcome prevalence equals $\hat\pi$.
\end{enumerate}

\Statex

\Statex \textbf{\underline{Stage 3: Causal and Mediation Weighting}} \ResetSteps
\begin{enumerate}%[label=\roman*.]
\renewcommand{\labelenumi}{(\roman{enumi})}
\item Estimate the propensity score: $e(\mathbf X)=P(A=1\mid\mathbf X).$
\item Compute stabilized treatment weights:
$$SW_i^A = \frac{P(A=A_i)}{P(A=A_i\mid \mathbf X_i)}.$$
\item Estimate mediator models: $f_a(M\mid\mathbf X) = P(M\mid A=a,\mathbf X), \quad a\in\{0,1\}.$
\item Construct PDE weights:
$$SW_i^{PDE} = SW_i^A \frac{f_0(M_i\mid\mathbf X_i)}{f_{obs}(M_i\mid A_i,\mathbf X_i)}.$$
\item Construct TDE weights:
$$SW_i^{TDE} = SW_i^A \frac{f_1(M_i\mid\mathbf X_i)}{f_{obs}(M_i\mid A_i,\mathbf X_i)}.$$
\item Define combined weights:
$$W_i^{TE}=w_i^{(D)}SW_i^A, \quad W_i^{PDE}=w_i^{(D)}SW_i^{PDE}, \quad W_i^{TDE}=w_i^{(D)}SW_i^{TDE}.$$
\item Fit weighted marginal structural models to estimate: $RR^{TE}, \quad RR^{PDE}, \quad RR^{TDE}.$
\item Obtain indirect and interaction effects:
$$RR^{PIE} = \frac{RR^{TE}}{RR^{TDE}}, \qquad RR^{InE} = \frac{RR^{TDE}}{RR^{PDE}}.$$
\end{enumerate}
\Statex 

\State \textbf{Return:} $\hat\pi, \; \{W_i^{TE},W_i^{PDE},W_i^{TDE}\}, \; (RR^{TE},RR^{PDE},RR^{PIE},RR^{InE}).$

\end{algorithmic}
\end{breakablealgorithm}

In the following subsections, we provide the theoretical justification for each stage of the proposed framework and establish the identification results underlying the proposed estimators.

\subsection{Stage 1: Identification of the Population Outcome Prevalence}\label{stage-1-identification-of-the-population-outcome-prevalence}

Let \(\pi = P_{Pop}(Y = 1)\) denote the true outcome prevalence in the target population. By the law of total expectation,
\begin{align}
\pi 
&= E_{Pop}\Big[ P(Y = 1 \mid \pmb{X}) \Big] 
= \int P(Y = 1 \mid \pmb{X}) \, P_{Pop}(\pmb{X}) \, d\pmb{x}. 
\label{eq:population_prevalence}
\end{align}

However, under a case--control (CC) sampling design, expectations with respect to the target population distribution are not directly observable. Instead, inference is based on the observed CC distribution. Suppose that externally available information provides only the marginal population distributions of the covariates, namely
\[
P(X_1), P(X_2), \ldots, P(X_p),
\]
obtained from census or surveillance data. The following lemma forms the basis of our identification strategy.

\begin{lemma}
For any integrable function $f(\pmb{X})$,
\begin{align*}
E_{Pop}\big[f(\pmb{X})\big] = E_{CC}\big[r(\pmb{X}) f(\pmb{X})\big],
\end{align*}
where
$r(\pmb{X}) = {P_{Pop}(\pmb{X})}/{P_{CC}(\pmb{X})}$
is the density ratio between the target population and the observed CC sample. Furthermore,
$E_{CC}[r(\pmb{X})] = 1.$
\label{lemma_int}
\end{lemma}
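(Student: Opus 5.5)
The argument is a direct change of measure. I will treat $P_{Pop}(\pmb{X})$ and $P_{CC}(\pmb{X})$ as densities with respect to a common dominating measure $\mu$ (Lebesgue, counting, or a product of the two for mixed covariates). The key regularity condition is absolute continuity: $P_{Pop}$ must be dominated by $P_{CC}$, so that $P_{CC}(\pmb{x})=0$ implies $P_{Pop}(\pmb{x})=0$. Under this condition $r$ is well defined $P_{CC}$-almost everywhere as the Radon--Nikodym derivative $dP_{Pop}/dP_{CC}$. The convention $r=0$ wherever both densities vanish is harmless.

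I will verify this condition from the sampling design rather than simply assume it. By Bayes' theorem, as displayed in the identification subsection,
\[
P_{CC}(\pmb{x})=\sum_{y}P(\pmb{x}\mid y)\,P(Y=y\mid S=1).
\]
Likewise $P_{Pop}(\pmb{x})=\sum_y P(\pmb{x}\mid y)\,P(Y=y)$, because $P(S=1\mid Y,\pmb{X})=P(S=1\mid Y)$ leaves $P(\pmb{x}\mid y)$ unchanged by sampling. Whenever both $P(Y=1\mid S=1)$ and $P(Y=0\mid S=1)$ are positive, which holds in any design that samples both cases and controls, the following implication holds:
\[
P_{CC}(\pmb{x})=0 \;\Rightarrow\; P(\pmb{x}\mid y)=0 \text{ for } y=0,1 \;\Rightarrow\; P_{Pop}(\pmb{x})=0 .
\]
This establishes absolute continuity. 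I expect this support condition to be the only real obstacle. The rest is bookkeeping, and the point is mainly to state clearly what "integrable" must mean.

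With absolute continuity in hand, the main step is the chain
\[
E_{CC}[r(\pmb{X})f(\pmb{X})]=\int \frac{P_{Pop}(\pmb{x})}{P_{CC}(\pmb{x})}\,f(\pmb{x})\,P_{CC}(\pmb{x})\,d\mu(\pmb{x}) .
\]
The integral is taken over $\{P_{CC}>0\}$, on which the ratio cancels to give $\int f(\pmb{x})P_{Pop}(\pmb{x})\,d\mu(\pmb{x})$. The complement $\{P_{CC}=0\}$ carries no $P_{Pop}$ mass, so the result equals $E_{Pop}[f(\pmb{X})]$.

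Integrability must be interpreted as $E_{Pop}|f|<\infty$. Applying the same identity to $|f|$ shows this is equivalent to $E_{CC}|rf|<\infty$, so both sides are finite simultaneously. For nonnegative $f$ the identity holds even without finiteness, by the same computation or by monotone convergence.

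The final claim follows by taking $f\equiv 1$, which is trivially integrable. This gives $E_{CC}[r(\pmb{X})]=E_{Pop}[1]=1$.

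I will close with a remark that makes the lemma usable for the prevalence equation \eqref{eq:population_prevalence}. The remark combines the lemma with $f(\pmb{X})=P(Y=1\mid \pmb{X})$, the population risk, which is bounded and hence integrable. This yields $\pi=E_{CC}[r(\pmb{X})P(Y=1\mid\pmb{X})]$, which is the population version of the mapping $T(\pi)$ in Stage 1.
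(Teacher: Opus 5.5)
Your proposal is correct and follows the same change-of-measure argument as the paper's proof: multiply and divide by $P_{CC}$ inside the integral, cancel, and take $f\equiv 1$ (equivalently, integrate $r$ directly) to get $E_{CC}[r(\mathbf{X})]=1$. Your derivation of absolute continuity from the outcome-dependent sampling design, and your clarification that integrability means $E_{Pop}|f|<\infty$, are sound additions that make explicit conditions the paper's proof uses without stating.
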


Using Lemma \ref{lemma_int}, equation \eqref{eq:population_prevalence} can be rewritten as
\begin{align*}
\pi &= E_{CC}\Big[r(\pmb{X}) P(Y = 1 \mid \pmb{X})\Big].
\end{align*}

Consequently, a semi-parametric plug-in estimator of \(\pi\) is given by
\begin{equation}
\hat{\pi} = \frac{ \sum_{i=1}^{n} r(\pmb{X}_i) s(\pmb{X}_i)}{\sum_{i=1}^{n} r(\pmb{X}_i)},
\label{eq:semi_parametric_estimator}
\end{equation}
where \(s(\pmb{X}_i)\) denotes an estimate of \(P(Y = 1 \mid \pmb{X}_i)\). The denominator in \eqref{eq:semi_parametric_estimator} corresponds to self-normalized importance weighting, which improves numerical stability.

A key challenge is that the joint population covariate distribution \(P_{Pop}(\pmb{X})\) is not identifiable from CC data alone. Moreover, only the marginal covariate distributions are assumed to be externally available. To approximate the unknown population joint distribution, we adopt a maximum entropy perspective and generate a synthetic population sample using the available marginals:
\begin{align*}
X_j^{(syn)} \sim P(X_j), \qquad j = 1,2,\ldots,p.
\end{align*}

This produces a synthetic sample
$\big\{
\pmb{X}^{(syn)}_k
\big\}_{k=1}^{N_{syn}}$,
which approximates the least informative joint distribution satisfying the observed marginal constraints. The synthetic sample is then combined with the observed CC sample, and a source indicator variable \(Z\) is defined as
\[
Z =
\begin{cases}
1, & \text{if } \pmb{X} \text{ originates from the synthetic sample}, \\
0, & \text{if } \pmb{X} \text{ originates from the CC sample}.
\end{cases}
\]

The density ratio
$r(\pmb{X}) = {P_{Pop}(\pmb{X})}/{P_{CC}(\pmb{X})}$
quantifies the discrepancy between the CC covariate distribution and the target population distribution. Values of \(r(\pmb{X}) > 1\) indicate covariate profiles that are underrepresented in the CC sample, whereas values of \(r(\pmb{X}) < 1\) indicate overrepresentation.

Direct estimation of multivariate densities is often challenging in moderate or high dimensions. Instead, we reformulate density ratio estimation as a probabilistic classification problem. Define the classifier
$R(\pmb{X}) = P(Z = 1 \mid \pmb{X})$,
which represents the probability that an observation originates from the synthetic sample. Similarly, \(1 - R(\pmb{X}) = P(Z = 0 \mid \pmb{X})\) represents the probability that an observation originates from the CC sample.

Applying Bayes' theorem,
\begin{align*}
\frac{R(\pmb{X})}{1 - R(\pmb{X})} &= \frac{P(\pmb{X} \mid syn)}{P(\pmb{X} \mid CC)} \times \frac{P(syn)}{P(CC)}.
\end{align*}

If the synthetic and CC samples are generated with equal class proportions, then \(P(syn) = P(CC)\), yielding
\begin{align*}
\frac{R(\pmb{X})}{1 - R(\pmb{X})} &= \frac{P_{syn}(\pmb{X})}{P_{CC}(\pmb{X})}
= \frac{P_{Pop}(\pmb{X})}{P_{CC}(\pmb{X})} = r(\pmb{X}).
\end{align*}

Thus, the density ratio can be estimated directly from the fitted classifier without explicit multivariate density estimation.

The remaining component required in \eqref{eq:semi_parametric_estimator} is the conditional risk function \(P(Y = 1 \mid \pmb{X})\). Since the CC design distorts the marginal outcome prevalence, naive estimation from the observed data does not recover the population risk function directly. Nevertheless, the conditional risk structure remains informative. We therefore first estimate an initial prediction model within the CC sample using any probabilistic classifier. In this work, we consider logistic regression:
\begin{align*}
S_{CC}(\pmb{X}) = P(Y = 1 \mid \pmb{X}) =
\mathrm{expit}
\big(\beta_0 + \pmb{\beta}^{\prime}\pmb{X}\big).
\end{align*}

Although the intercept is distorted by outcome-dependent sampling, the conditional odds structure remains valid up to a multiplicative correction factor. This motivates a label-shift correction, formalized below.

\begin{lemma}
Under CC sampling,
\begin{align*}
\mathrm{odds}_{Pop}(\pmb{X}) = \alpha(\pi) \times \mathrm{odds}_{CC}(\pmb{X}),
\end{align*}
where
\begin{align*}
\alpha(\pi) = \frac{\pi/(1-\pi)}{p_{CC}/(1-p_{CC})},
\end{align*}
with $\pi = P_{Pop}(Y=1)$ and $p_{CC}=P_{CC}(Y=1)$.
\label{lemma_pi}
\end{lemma}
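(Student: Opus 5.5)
The plan is to work entirely with the sampling mechanism of Section 2, namely $P(S=1\mid Y,A,M,\pmb{X})=P(S=1\mid Y)$. We first express the case--control conditional risk through Bayes' theorem and then take odds, so that the covariate-dependent pieces cancel and only outcome-level constants remain. The case--control distribution is the population distribution conditioned on $S=1$. Therefore
\[
P_{CC}(Y=y\mid \pmb{X}) = P(Y=y\mid \pmb{X},S=1) = \frac{P(S=1\mid Y=y,\pmb{X})\,P_{Pop}(Y=y\mid \pmb{X})}{P(S=1\mid \pmb{X})} = \frac{P(S=1\mid Y=y)\,P_{Pop}(Y=y\mid \pmb{X})}{P(S=1\mid \pmb{X})}.
\]
The last equality is exactly where outcome-dependent sampling is used.

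Next we write this expression for $y=1$ and for $y=0$ and take the ratio. The normalizer $P(S=1\mid \pmb{X})$ cancels, which gives
\[
\mathrm{odds}_{CC}(\pmb{X}) = \frac{s_1}{s_0}\,\mathrm{odds}_{Pop}(\pmb{X}), \qquad s_y = P(S=1\mid Y=y).
\]
This already shows that the two odds differ by a constant that does not depend on $\pmb{X}$. It remains to identify $s_0/s_1$ as $\alpha(\pi)$.

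We apply the same Bayes argument marginally, without conditioning on $\pmb{X}$. This yields $p_{CC}=P(Y=1\mid S=1)=s_1\pi/P(S=1)$ and $1-p_{CC}=s_0(1-\pi)/P(S=1)$. Hence
\[
\frac{p_{CC}}{1-p_{CC}} = \frac{s_1}{s_0}\cdot\frac{\pi}{1-\pi},
\qquad\text{so}\qquad \frac{s_0}{s_1} = \frac{\pi/(1-\pi)}{p_{CC}/(1-p_{CC})} = \alpha(\pi).
\]
Substituting this into the previous display gives $\mathrm{odds}_{Pop}(\pmb{X})=\alpha(\pi)\,\mathrm{odds}_{CC}(\pmb{X})$.

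None of these steps is deep. The points that need care are the following.
\begin{enumerate}
\item Positivity: we need $s_0,s_1>0$, $0<\pi<1$ and $0<p_{CC}<1$ so that every odds and ratio above is well defined.
\item Interpretation of $P_{CC}(Y\mid\pmb{X})$: it must be read as the conditional law given $S=1$. The sample is drawn i.i.d. from $f(\cdot\mid S=1)$, so this reading is legitimate.
\item Conditional independence in $\pmb{X}$: the cancellation of the covariate-dependent terms relies on the sampling probability being the same given $(Y,\pmb{X})$ as given $Y$ alone. If sampling depended on $\pmb{X}$ (for example, matching), the correction factor would become $\pmb{X}$-dependent.
\end{enumerate}
The lemma is therefore a statement about the design as defined in Section 2.
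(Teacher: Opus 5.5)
Your proof is correct, but it is organized differently from the paper's.

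The paper never uses the selection indicator $S$. It posits the label-shift invariance $P_{Pop}(\pmb{X}\mid Y=y)=P_{CC}(\pmb{X}\mid Y=y)$. It then writes each odds in generative form, as the likelihood ratio $P(\pmb{X}\mid Y=1)/P(\pmb{X}\mid Y=0)$ times the prior odds ($\pi/(1-\pi)$ or $p_{CC}/(1-p_{CC})$). Dividing the two expressions cancels the likelihood ratios, so $\alpha(\pi)$ appears in a single step.

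You apply Bayes in the discriminative direction through $S$. This first gives the unknown constant $s_1/s_0$, so you need a second, marginal Bayes step to identify $s_0/s_1$ with $\alpha(\pi)$.

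Both routes rest on the same fact. Given $s_y>0$, the condition $S\perp\pmb{X}\mid Y$ is equivalent to the paper's invariance assumption. Your version derives this condition from the design assumption of Section 2, where the paper only calls it ``reasonable to assume.'' The reduction from $P(S=1\mid Y,A,M,\pmb{X})$ to $P(S=1\mid Y,\pmb{X})$ is the tower property over $(A,M)$, which you use implicitly and could state. Your version also makes clear when the lemma fails, for example under $\pmb{X}$-dependent matching.

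The paper's version is shorter and needs no sampling probabilities. That suits the sampling actually used in the simulation and the NHANES illustration, which fixes the numbers of cases and controls, so $P(S=1\mid Y)$ is not a natural primitive there. It also matches the class-conditional framing of the Stage~1 label-shift correction.
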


Using Lemma \ref{lemma_pi},
\begin{align*}
\frac{P_{Pop}(Y = 1 \mid \pmb{X})}{1 - P_{Pop}(Y = 1 \mid \pmb{X})}
&= \frac{\alpha(\pi) S_{CC}(\pmb{X})}{1 - S_{CC}(\pmb{X})}.
\end{align*}

Solving for the corrected population risk function yields
\begin{equation}
S_{Pop}(\pmb{X}) = \frac{\alpha(\pi) S_{CC}(\pmb{X})}{1 - S_{CC}(\pmb{X})
+ \alpha(\pi) S_{CC}(\pmb{X})}.
\label{eq:population_risk}
\end{equation}

Substituting \eqref{eq:population_risk} into \eqref{eq:semi_parametric_estimator} gives the proposed estimator of \(\pi\):
\begin{equation}
\hat{\pi} =
\frac{\sum_{i=1}^{n} r(\pmb{X}_i) S_{Pop}(\pmb{X}_i,\pi)}{\sum_{i=1}^{n}  r(\pmb{X}_i)}.
\label{eq:updated_estimator}
\end{equation}

Equation \eqref{eq:updated_estimator} defines the fixed-point mapping
\begin{align*}
T(\pi)  =
\frac{\sum_{i=1}^{n}
r(\pmb{X}_i) S_{Pop}(\pmb{X}_i,\pi)}{\sum_{i=1}^{n} r(\pmb{X}_i)},
\end{align*}
such that the final estimate \(\hat{\pi}\) is obtained by solving
\[
T(\hat{\pi}) - \hat{\pi} = 0
\]
using one-dimensional numerical root-finding methods.

\subsection{Stage 2: Reconstruction of the Population Outcome Distribution}\label{stage-2-reconstruction-of-the-population-outcome-distribution}

Under a CC sampling design, \(P_{CC}(Y = 1) \neq P_{Pop}(Y = 1)=\pi,\)
and therefore population-level causal estimands are not directly identified from the observed sample. To recover the target outcome prevalence, we construct design-based weights that reweight cases and controls to reproduce the estimated population prevalence:
\begin{align*}
w_i^{(design)} =
\begin{cases}
\hat{\pi}/\hat{\pi}_{CC}, & Y_i = 1, \\[0.5em]
(1-\hat{\pi})/(1-\hat{\pi}_{cc}), & Y_i = 0,
\end{cases}
\end{align*}
where $\hat{\pi}_{CC} = P_{cc}(Y = 1)$.

Application of these weights generates a pseudo-population in which the marginal outcome distribution matches that of the target population:
\begin{align*}
w_i^{(design)} f_{CC}(Y) = f_{Pop}(Y).
\end{align*}

Thus, Stage 2 reconstructs a pseudo cross-sectional population from the observed CC sample, thereby enabling identification of population-level causal effects.

\subsection{Stage 3: Estimation of Causal Effects and Mediation Effects}\label{stage-3-estimation-of-causal-effects-and-mediation-effects}

Following reconstruction of the target population structure, causal estimands are identified within the resulting pseudo-population using weighting-based causal inference methods.

\subsubsection{Inverse Probability Weighting}\label{inverse-probability-weighting}

Inverse probability weighting (IPW) is a widely used approach for causal effect estimation in observational studies \citep{horvitz1952generalization}. The method constructs a pseudo-population in which treatment assignment is independent of measured confounders. Let \(\pmb{C}\) denote the set of observed confounders. The stabilized IPW is defined as
\begin{align*}
SW^{A} = \frac{P(A=a)}{P(A=a \mid \pmb{C})}, \qquad a \in \{0,1\}.
\end{align*}

Compared with unstabilized weights, stabilized weights improve numerical stability and reduce the influence of extreme propensity scores \citep{robins1997causal, robins2000marginal}.

Within the weighted pseudo-population, the following marginal structural model is fitted:
\begin{align*}
\log
P_{SW_A}(Y = 1 \mid A = a) = \alpha_0 + \alpha_a a.
\end{align*}

The total effect (TE) on the relative risk scale is then estimated as \(RR^{TE} = \exp(\alpha_a)\).

\subsubsection{Causal Mediation Analysis}\label{causal-mediation-analysis}

Causal mediation analysis (CMA) aims to decompose the total causal effect of an exposure \(A\) on an outcome \(Y\) into pathway-specific components operating through an intermediate mediator \(M\). Under the counterfactual framework \citep{robins2000marginal, vanderweele2015explanation}, let \(Y^{a,m}\) denote the potential outcome that would be observed if exposure were set to \(a\) and mediator to \(m\), and let \(M^a\) denote the potential mediator value under exposure level \(a\).

The total effect is characterized through the nested counterfactual means
\[
  E\bigl(Y^{1,M^1}\bigr), \quad
  E\bigl(Y^{1,M^0}\bigr), \quad
  E\bigl(Y^{0,M^1}\bigr), \quad
  E\bigl(Y^{0,M^0}\bigr).
  \]

Following the three-way decomposition on the multiplicative scale, the total effect (TE), pure direct effect (PDE), total direct effect (TDE), pure indirect effect (PIE), and interaction effect (InE) are defined as
\begin{align*}
RR^{TE}
&=
  \frac{
    E(Y^{1,M^1})
  }{
    E(Y^{0,M^0})
  },\;\;
RR^{PDE}
=
  \frac{
    E(Y^{1,M^0})
  }{
    E(Y^{0,M^0})
  },\;\;
RR^{TDE}
=
  \frac{
    E(Y^{1,M^1})
  }{
    E(Y^{0,M^1})
  }.
\end{align*}

The remaining pathway-specific effects are obtained through the multiplicative decomposition
\begin{align}
RR^{PIE}
&=
  \frac{
    RR^{TE}
  }{
    RR^{TDE}
  },\;\;
RR^{InE}
=
  \frac{
    RR^{TDE}
  }{
    RR^{PDE}
  }.
\label{eq:rr_decomposition}
\end{align}

Consequently,
\[
  RR^{TE}
  =
    RR^{PDE}
  \times
  RR^{PIE}
  \times
  RR^{InE}.
  \]

Identification of these quantities from observational data requires assumptions of consistency, sequential ignorability, and positivity. Specifically, assume that
\[
  Y^{a,m}\perp A \mid \mathbf{X}\;\;
and\;\;
  Y^{a,m}\perp M \mid A=a,\mathbf{X},
  \]
where \(\mathbf{X}\) denotes the vector of baseline confounders sufficient to control exposure--outcome, exposure--mediator, and mediator--outcome confounding.

Under these assumptions, the following weighted representations hold.

\begin{lemma}
Under consistency, sequential ignorability, and positivity assumptions,
\begin{align}
E\bigl(Y^{a,M^a}\bigr)
&=
  E\left[
    \frac{
      Y\,I(A=a)
    }{
      P(A=a\mid \mathbf{X})
    }
    \right],
\label{eq:lemma_te}
\\[1em]
E\bigl(Y^{a,M^{a^\star}}\bigr)
&=
  E\left[
    \frac{
      Y\,I(A=a)
    }{
      P(A=a\mid \mathbf{X})
    }
    \,
    \frac{
      f(M\mid A=a^\star,\mathbf{X})
    }{
      f(M\mid A=a,\mathbf{X})
    }
    \right].
\label{eq:lemma_nested}
\end{align}
\label{lemma_mediation}
\end{lemma}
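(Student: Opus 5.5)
The plan is to prove both identities by iterated conditional expectations in the population distribution: first condition on $\mathbf{X}$, then on $(A,\mathbf{X})$, then on $(M,A,\mathbf{X})$. At each step I will reduce the weighted observed-data expectation to the mediation g-formula
\[
\int\!\!\int E(Y\mid A=a,M=m,\mathbf{X}=\mathbf{x})\, f(m\mid A=a^\star,\mathbf{x})\,dm\,dF(\mathbf{x}),
\]
and then identify this g-formula with the nested counterfactual mean using the stated assumptions. Since \eqref{eq:lemma_te} is the special case $a^\star=a$ of \eqref{eq:lemma_nested}, the mediator ratio becomes $1$ and $Y^{a,M^a}=Y^a$ by composition. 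It therefore suffices to prove \eqref{eq:lemma_nested}.

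\textbf{Step 1: the weighted expectation equals the g-formula.} Write
\[
E\!\left[\frac{Y I(A=a)}{P(A=a\mid\mathbf{X})}\frac{f(M\mid a^\star,\mathbf{X})}{f(M\mid a,\mathbf{X})}\right]
=E\!\left[\frac{I(A=a)}{P(A=a\mid\mathbf{X})}\frac{f(M\mid a^\star,\mathbf{X})}{f(M\mid a,\mathbf{X})}E(Y\mid A,M,\mathbf{X})\right].
\]
Next condition on $(A,\mathbf{X})$. On the event $A=a$, integrating over $m$ against $f(m\mid a,\mathbf{X})$ cancels the denominator of the mediator ratio, which leaves $\int E(Y\mid a,m,\mathbf{X})f(m\mid a^\star,\mathbf{X})\,dm$. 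Finally, conditioning on $\mathbf{X}$ gives $E[I(A=a)\mid\mathbf{X}]/P(A=a\mid\mathbf{X})=1$. Positivity, meaning $P(A=a\mid\mathbf{X})>0$ and $f(m\mid a,\mathbf{X})>0$ wherever $f(m\mid a^\star,\mathbf{X})>0$, makes each division well defined and keeps the support of the integral intact.

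\textbf{Step 2: the g-formula equals $E(Y^{a,M^{a^\star}})$.} Use consistency ($Y=Y^{a,m}$ when $A=a,M=m$) and $Y^{a,m}\perp M\mid A=a,\mathbf{X}$ to get
\[
E(Y\mid a,m,\mathbf{X})=E(Y^{a,m}\mid a,m,\mathbf{X})=E(Y^{a,m}\mid A=a,\mathbf{X}),
\]
and then $Y^{a,m}\perp A\mid\mathbf{X}$ to get $E(Y^{a,m}\mid A=a,\mathbf{X})=E(Y^{a,m}\mid\mathbf{X})$. Similarly, $f(m\mid A=a^\star,\mathbf{X})=f_{M^{a^\star}}(m\mid\mathbf{X})$, which requires $M^{a}\perp A\mid\mathbf{X}$ together with consistency for $M$. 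The last step writes
\[
E(Y^{a,M^{a^\star}}\mid\mathbf{X})=\int E(Y^{a,m}\mid M^{a^\star}=m,\mathbf{X})\,f_{M^{a^\star}}(m\mid\mathbf{X})\,dm
\]
and needs $E(Y^{a,m}\mid M^{a^\star}=m,\mathbf{X})=E(Y^{a,m}\mid\mathbf{X})$, i.e.\ the cross-world independence $Y^{a,m}\perp M^{a^\star}\mid\mathbf{X}$. Averaging over $F(\mathbf{x})$ then completes the argument.

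\textbf{Main obstacle.} The hard step is the one just described, because it is not implied by the two displayed conditional independences as written. For $a^\star\neq a$ it is a cross-world assumption. I would state explicitly that the paper's ``sequential ignorability'' is read in the Imai--Keele--Yamamoto / VanderWeele sense: $\{Y^{a,m},M^{a^\star}\}\perp A\mid\mathbf{X}$ and $Y^{a,m}\perp M^{a^\star}\mid A=a,\mathbf{X}$, which together give the needed cross-world independence. I would flag this as the interpretation under which the lemma holds. All expectations are taken under the population law $f$; in the paper's pipeline this law is the Stage 2 pseudo-population, but the lemma itself concerns population quantities only.
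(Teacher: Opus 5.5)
Your proof is correct and follows essentially the paper's route. Both arguments pass through the mediation g-formula $E_{\mathbf X}\bigl[\sum_m E(Y\mid A=a,M=m,\mathbf X)\,f(m\mid A=a^\star,\mathbf X)\bigr]$, with two differences: you argue from the weighted expectation toward the counterfactual while the paper argues the other way, and the paper proves \eqref{eq:lemma_te} separately rather than as the case $a^\star=a$. Your explicit iterated-expectation Step 1 makes rigorous the paper's ``recognizing the resulting observed-data distribution,'' and you are right that the paper's displayed assumptions do not suffice: it silently uses the cross-world independence $Y^{a,m}\perp M^{a^\star}\mid\mathbf X$ in its first decomposition of $E(Y^{a,M^{a^\star}})$, and $M^{a^\star}\perp A\mid\mathbf X$ in the step $f(M^{a^\star}=m\mid\mathbf X)=f(M=m\mid A=a^\star,\mathbf X)$.
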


Equation \eqref{eq:lemma_te} motivates the stabilized inverse probability of treatment weight
\begin{align*}
SW^{A}
=
  \frac{
    P(A=a)
  }{
    P(A=a\mid \mathbf{X})
  },
\end{align*}
which is used for estimation of the total effect.

For nested counterfactual means involving mediator interventions, define the mediator density ratio
\begin{align*}
R_M(a,a^\star)
=
  \frac{
    f(M\mid A=a^\star,\mathbf{X})
  }{
    f(M\mid A=a,\mathbf{X})
  }.
\end{align*}

The stabilized mediation weights for estimation of \(E(Y^{1,M^0})\) are therefore
\begin{align*}
SW^{PDE}
=
  \frac{
    P(A=1)
  }{
    P(A=1\mid \mathbf{X})
  }
\,
\frac{
  f(M\mid A=0,\mathbf{X})
}{
  f(M\mid A=1,\mathbf{X})
},
\qquad
A=1,
\end{align*}
whereas for subjects with \(A=0\),
\begin{align*}
SW^{PDE}
=
  \frac{
    P(A=0)
  }{
    P(A=0\mid \mathbf{X})
  }.
\end{align*}

Similarly, for estimation of \(E(Y^{0,M^1})\), the stabilized mediation weights are
\begin{align*}
SW^{TDE}
=
  \frac{
    P(A=0)
  }{
    P(A=0\mid \mathbf{X})
  }
\,
\frac{
  f(M\mid A=1,\mathbf{X})
}{
  f(M\mid A=0,\mathbf{X})
},
\qquad
A=0,
\end{align*}
whereas for subjects with \(A=1\),
\begin{align*}
SW^{TDE}
=
  \frac{
    P(A=1)
  }{
    P(A=1\mid \mathbf{X})
  }.
\end{align*}

Application of these weights generates pseudo-populations in which the corresponding nested counterfactual means are identified from weighted observed outcomes. Within the resulting pseudo-populations, marginal structural models of the form
\begin{align*}
\log
P_{SW}(Y=1\mid A=a)
=
  \beta_0+\beta_a a
\end{align*}
are fitted using weighted generalized estimating equations with a log link.

Separate weighted models are fitted for estimation of:
\[
    RR^{TE},
    \qquad
    RR^{PDE},
    \qquad
    RR^{TDE}.
    \]

The remaining mediation effects are subsequently obtained through the multiplicative decomposition in Equation \eqref{eq:rr_decomposition}.

Within the proposed three-stage weighting framework, these mediation weights are applied after reconstruction of the target population structure through the design-based weighting procedure developed for retrospective case-control sampling. Consequently, the final pseudo-population simultaneously accounts for outcome-dependent sampling, confounding bias, and mediator imbalance, thereby enabling valid estimation of population-level mediation effects from retrospective data.

Standard errors for all causal effect estimators are obtained using the Delta method or bootstrap method.

\subsection{Extension of the Proposed 3S-Weighting Framework to Longitudinal Dynamic Case-Control Settings}\label{extension-of-the-proposed-3s-weighting-framework-to-longitudinal-dynamic-case-control-settings}

Although the proposed framework is developed for static cross-sectional outcomes, the underlying identification strategy naturally extends to longitudinal cohort settings involving time-varying exposure, sequential risk sets, and retrospective outcome-dependent sampling. This extension is particularly relevant in epidemiologic and registry-based studies where individuals are followed over time until occurrence of an event or administrative censoring, whereas the final analytic dataset is obtained retrospectively using a nested case-control or case-cohort sampling design.

Consider a longitudinal cohort observed over discrete follow-up times
$t = 0, 1, \ldots, \tau$.
For individual \(i\), let
$\bar{A}_{it} = (A_{i0},A_{i1}, \ldots, A_{it})$
denote the exposure history up to time \(t\), and let
\[
\bar{\mathbf{X}}_{it} =
(\mathbf{X}_{i0},\mathbf{X}_{i1},\ldots,\mathbf{X}_{it})
\]
represent the corresponding history of time-varying covariates. Let \(T_i\) denote the event time and \(C_i\) the censoring time. The observed follow-up time is
$\tilde{T}_i = \min(T_i,C_i)$,
with event indicator
$\Delta_i = I(T_i \le C_i)$.

Figure \ref{Figure_dynamic_followup} illustrates a representative longitudinal follow-up structure. Solid circles denote exposed states, open circles denote unexposed states, crosses indicate event occurrence between two observation times, and squares indicate censoring. Individuals may transition repeatedly between exposed and unexposed states prior to the terminal event. For example, ID1 remains unexposed during the early follow-up period, subsequently transitions to sustained exposure, and experiences the event between follow-up times 8 and 9. ID4 becomes exposed during follow-up and is administratively censored before the study end. ID3 remains event-free throughout the observation window. Such longitudinal trajectories arise naturally in chronic disease epidemiology, pharmacoepidemiology, and registry-based surveillance systems.

\begin{figure}[!ht]
\centering
\includegraphics[width=.95\textwidth]{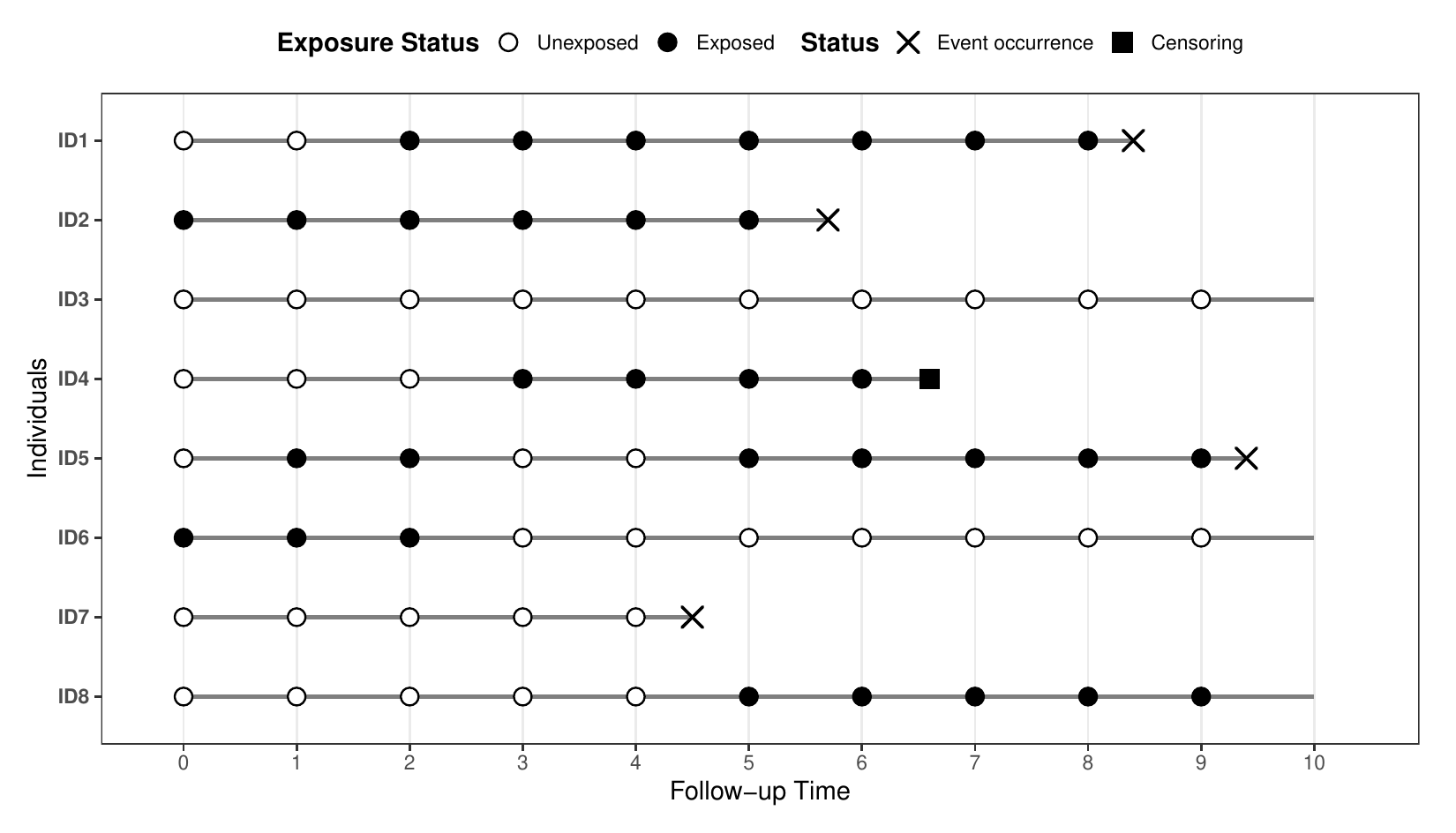}
\caption{
Illustration of longitudinal follow-up with time-varying exposure histories. Open circles denote unexposed states and solid circles denote exposed states at discrete observation times. Crosses indicate event occurrence between two observation times, whereas squares indicate censoring. Horizontal lines represent individual follow-up trajectories.
}
\label{Figure_dynamic_followup}
\end{figure}

Under a dynamic setting, the target population quantity is no longer a single marginal prevalence but rather the cumulative incidence function
$F(t)=P(T\le t)$,
or equivalently the discrete-time hazard function
\begin{align*}
\lambda_t(\bar{a}_{t-1},\bar{\mathbf{x}}_t)
=
P(T=t \mid T\ge t,\bar{A}_{t-1}=\bar{a}_{t-1},
\bar{\mathbf{X}}_t=\bar{\mathbf{x}}_t).
\end{align*}

The principal challenge in retrospective longitudinal sampling is that the observed nested case-control sample no longer preserves the target distribution of evolving risk sets across follow-up time. Consequently, direct estimation of marginal structural quantities from the observed retrospective sample is generally biased due to both outcome-dependent sampling and informative depletion of the risk set over time.

The proposed 3S-weighting strategy extends naturally by redefining Stage 1 over sequential risk sets. Specifically, at each follow-up time \(t\), let
$\mathcal{R}_t=\{i:T_i\ge t,\ C_i\ge t\}$
denote the risk set immediately prior to time \(t\). Suppose that external population information is available for the marginal distribution of baseline covariates and, when available, for selected time-updated covariate summaries. A synthetic longitudinal population can then be generated by sampling trajectories from the corresponding marginal distributions. Pooling the synthetic trajectories with the observed retrospective sample allows estimation of a time-specific density ratio
\begin{align*}
r_t(\bar{\mathbf{X}}_t,\bar{A}_{t-1})
=
\frac{
P_{Pop}(\bar{\mathbf{X}}_t,\bar{A}_{t-1}\mid i\in\mathcal{R}_t)
}{
P_{CC}(\bar{\mathbf{X}}_t,\bar{A}_{t-1}\mid i\in\mathcal{R}_t)
},
\end{align*}
through probabilistic classification, analogous to the static framework developed previously.

Within each risk set, a discrete-time event model is subsequently fitted:
\begin{align*}
P(T=t\mid T\ge t,\bar{A}_{t-1},\bar{\mathbf{X}}_t)
=
\mathrm{expit}
\left\{
\eta_t(\bar{A}_{t-1},\bar{\mathbf{X}}_t)
\right\}.
\end{align*}

Because retrospective sampling distorts the marginal event probability within each risk set, the fitted hazards require calibration to the target population scale through a time-specific label-shift correction. Let
\[
\pi_t=P_{Pop}(T=t\mid T\ge t)
\]
denote the population hazard at follow-up time \(t\), and let
\[
p_{CC,t}=P_{CC}(T=t\mid T\ge t)
\]
denote the corresponding hazard within the retrospective sample. The corrected hazard function is then obtained as
\begin{align*}
\lambda_t^{Pop}
=
\frac{
\alpha_t(\pi_t)\lambda_t^{CC}
}{
1-\lambda_t^{CC}
+
\alpha_t(\pi_t)\lambda_t^{CC}
},
\end{align*}
where
\begin{align*}
\alpha_t(\pi_t)
=
\frac{
\pi_t/(1-\pi_t)
}{
p_{CC,t}/(1-p_{CC,t})
}.
\end{align*}

The cumulative incidence function is therefore reconstructed through sequential aggregation of the calibrated hazards:
\begin{align*}
F(t)
=
1-
\prod_{u=1}^{t}
\left\{
1-\lambda_u^{Pop}
\right\}.
\end{align*}

Stage 2 generalizes by constructing longitudinal design weights that reconstruct the evolving target risk sets over follow-up time. Specifically, each individual trajectory receives a product weight of the form
\begin{align*}
W_i^{(design)}
=
\prod_{t=0}^{\tau}
w_{it}^{(design)},
\end{align*}
where the time-specific component aligns the observed retrospective risk set with the corresponding target population risk set at time \(t\). This sequential weighting scheme reconstructs the dynamic population structure that would have been observed under complete cohort follow-up.

Stage 3 extends through longitudinal inverse probability weighting. Let
\[
\bar{L}_t=(\bar{\mathbf{X}}_t,\bar{M}_t)
\]
denote the full observed covariate and mediator history up to time \(t\). Stabilized longitudinal treatment weights are defined as
\begin{align*}
SW_i^{A}
=
\prod_{t=0}^{\tau}
\frac{
P(A_t\mid \bar{A}_{t-1})
}{
P(A_t\mid \bar{A}_{t-1},\bar{L}_t)
},
\end{align*}
with analogous mediator weights constructed for dynamic mediation analysis. The final pseudo-population is therefore generated through the combined product weight
\begin{align*}
W_i^{Total}
=
W_i^{(design)}
\times
SW_i^{A,M},
\end{align*}
thereby simultaneously correcting for retrospective sampling, time-varying confounding, and longitudinal mediation pathways.

Within the resulting weighted pseudo-population, standard marginal structural models or structural nested models can be fitted to estimate cumulative causal effects over time. Consequently, the proposed framework provides a unified extension from static retrospective case-control designs to longitudinal dynamic observational settings while preserving the central identification principle underlying the proposed three-stage weighting strategy.

\section{Simulation Study}\label{simulation-study}

A Monte Carlo simulation study was conducted to evaluate the finite-sample performance of the proposed three-stage weighting (3S-weighting) framework for causal mediation analysis under retrospective case-control sampling. The primary objective was to assess the empirical bias, sampling variability, and consistency of the proposed estimators for the population prevalence, total effect (TE), pure direct effect (PDE), pure indirect effect (PIE), and interaction effect (InE).

The data-generating mechanism was constructed under the counterfactual mediation framework with a binary exposure \(A\), binary mediator \(M\), and binary outcome \(Y\). Two baseline covariates were generated independently for each subject:
\[
X_1 \sim \mathrm{Bernoulli}(0.45),
\qquad
X_2 \sim N(0,1).
\]

The exposure variable was generated according to the logistic model
\[
\text{logit}\left\{
P(A=1 \mid X_1,X_2)
\right\}
=
-0.5
+0.8X_1
+0.6X_2.
\]

Potential mediators under exposure levels \(a=0\) and \(a=1\) were generated from
\[
\text{logit}\left\{
P(M^a=1 \mid X_1,X_2)
\right\}
=
-0.4
+0.9a
+0.7X_1
+0.5X_2.
\]

The outcome model incorporated both mediation and exposure--mediator interaction effects. Specifically, the potential outcomes were generated according to
\[
\text{logit}\left\{
P(Y^{a,m}=1 \mid X_1,X_2)
\right\}
=
-3.3
+0.9a
+0.8m
+0.5(am)
+0.9X_1
+0.7X_2.
\]

The observed mediator and outcome variables were then defined through the consistency assumption: \(M=M^A,  Y=Y^{A,M}\). For each simulation replicate, a large super-population of size \(N=100,000\) was generated to approximate the true target population quantities. The true causal effects were calculated directly from the simulated counterfactual outcomes:
\[\begin{aligned}
RR^{TE}
&=
\frac{
E(Y^{1,M^1})
}{
E(Y^{0,M^0})
},
\qquad
RR^{PDE}
=
\frac{
E(Y^{1,M^0})
}{
E(Y^{0,M^0})
},\\[.25em]
RR^{TDE}
&=
\frac{
E(Y^{1,M^1})
}{
E(Y^{0,M^1})
},
\qquad
RR^{PIE}
=
\frac{
RR^{TE}
}{
RR^{TDE}
},
\qquad
RR^{InE}
=
\frac{
RR^{TDE}
}{
RR^{PDE}
}.
\end{aligned}
\]

Under the specified data-generating mechanism, the true values of the principal estimands were approximately \(\pi = 0.209, RR^{TE}=2.79, RR^{PDE}=2.41, RR^{PIE}=1.12, RR^{InE}=1.04\). Retrospective case-control samples were subsequently drawn from the full population by selecting fixed numbers of cases and controls. Dive sampling scenarios were considered to investigate finite-sample behavior and consistency properties: \((n_{\text{case}},n_{\text{control}}) = (250,250)\), \((n_{\text{case}},n_{\text{control}}) = (500,500)\), \((n_{\text{case}},n_{\text{control}}) = (1000,1000)\), \((n_{\text{case}},n_{\text{control}}) = (2000,2000)\), and \((n_{\text{case}},n_{\text{control}}) = (4000,4000)\).

Within each sampled dataset, the proposed 3S-weighting procedure was implemented. First, the disease prevalence was estimated using the density-ratio reconstruction method based on external covariate information and a synthetic reference population. Second, stabilized inverse probability of treatment weights were estimated using weighted propensity score models. Third, mediation weights were constructed through mediator density-ratio weighting to estimate the nested counterfactual means required for mediation decomposition. Weighted marginal structural models with log link and robust generalized estimating equations were then fitted to estimate the causal relative risks.

For each scenario, \(100\) Monte Carlo replications were performed. The finite-sample performance of the proposed estimators was evaluated using empirical bias, standard deviation (SD), and root mean squared error (RMSE).

Table \ref{tab:simulation_results} summarizes the simulation results. Across all sampling scenarios, the proposed estimators exhibited very small empirical bias for all mediation estimands. The estimated TE, PDE, PIE, and InE closely approximated their corresponding true values, indicating that the proposed weighting framework successfully recovered the target population-level causal effects despite outcome-dependent sampling.

A clear reduction in SD and RMSE was observed when the total sample size increased from \(n=4000\) to \(n=8000\), providing empirical evidence of consistency of the proposed estimators. In particular, the TE and PDE estimators demonstrated substantial gains in precision under larger sample sizes, whereas the PIE and InE estimators remained highly stable with very small estimation error across both settings. Overall, the simulation results demonstrate that the proposed 3S-weighting framework provides accurate and statistically stable estimation of population-level mediation effects under retrospective case-control sampling, even in the presence of exposure--mediator interaction and complex mediation structures.

\begin{table}[!htbp]
\centering
\caption{Finite-sample performance of the proposed 3S-weighting estimators under retrospective case-control sampling.}
\label{tab:simulation_results}
\begin{tabular}{crccc}
\toprule
Parameter & $n_{\text{cases}}$ & Absolute Bias & SD & RMSE \\
\midrule

\multirow{5}{*}{TE}
& $250$  & 0.1600  & 0.6110 & 0.6250 \\
& $500$ & 0.0969 & 0.3930 & 0.4010 \\
& $1000$ & 0.0664 & 0.2690 & 0.2720 \\
& $2000$ & 0.0463 & 0.2070 & 0.2030 \\
& $4000$ & 0.0340 & 0.1340 & 0.1280 \\
\midrule

\multirow{5}{*}{PDE}
& $250$  & 0.1440  & 0.5320 & 0.5490 \\
& $500$ & 0.0622 & 0.3450 & 0.3470 \\
& $1000$ & 0.0504 & 0.2290 & 0.2290 \\
& $2000$ & 0.0376 & 0.1760 & 0.1720 \\
& $4000$ & 0.0284 & 0.1180 & 0.1150 \\
\midrule

\multirow{5}{*}{PIE}
& $250$  & 0.0029 & 0.0749 & 0.0741 \\
& $500$ & 0.0007 & 0.0433 & 0.0435 \\
& $1000$ & 0.0006 & 0.0322 & 0.0335 \\
& $2000$ & 0.0003 & 0.0227 & 0.0251 \\
& $4000$ & 0.0002 & 0.0183 & 0.0204 \\
\midrule

\multirow{5}{*}{InE}
& $250$  & 0.0007 & 0.0639 & 0.0635 \\
& $500$ & 0.0087 & 0.0378 & 0.0384 \\
& $1000$ & 0.0030 & 0.0304 & 0.0316 \\
& $2000$ & 0.0006 & 0.0175 & 0.0191 \\
& $4000$ & 0.0000 & 0.0152 & 0.0172 \\
\bottomrule

\end{tabular}
\end{table}

\begin{figure}
\centering
\includegraphics[width=0.85\textwidth]{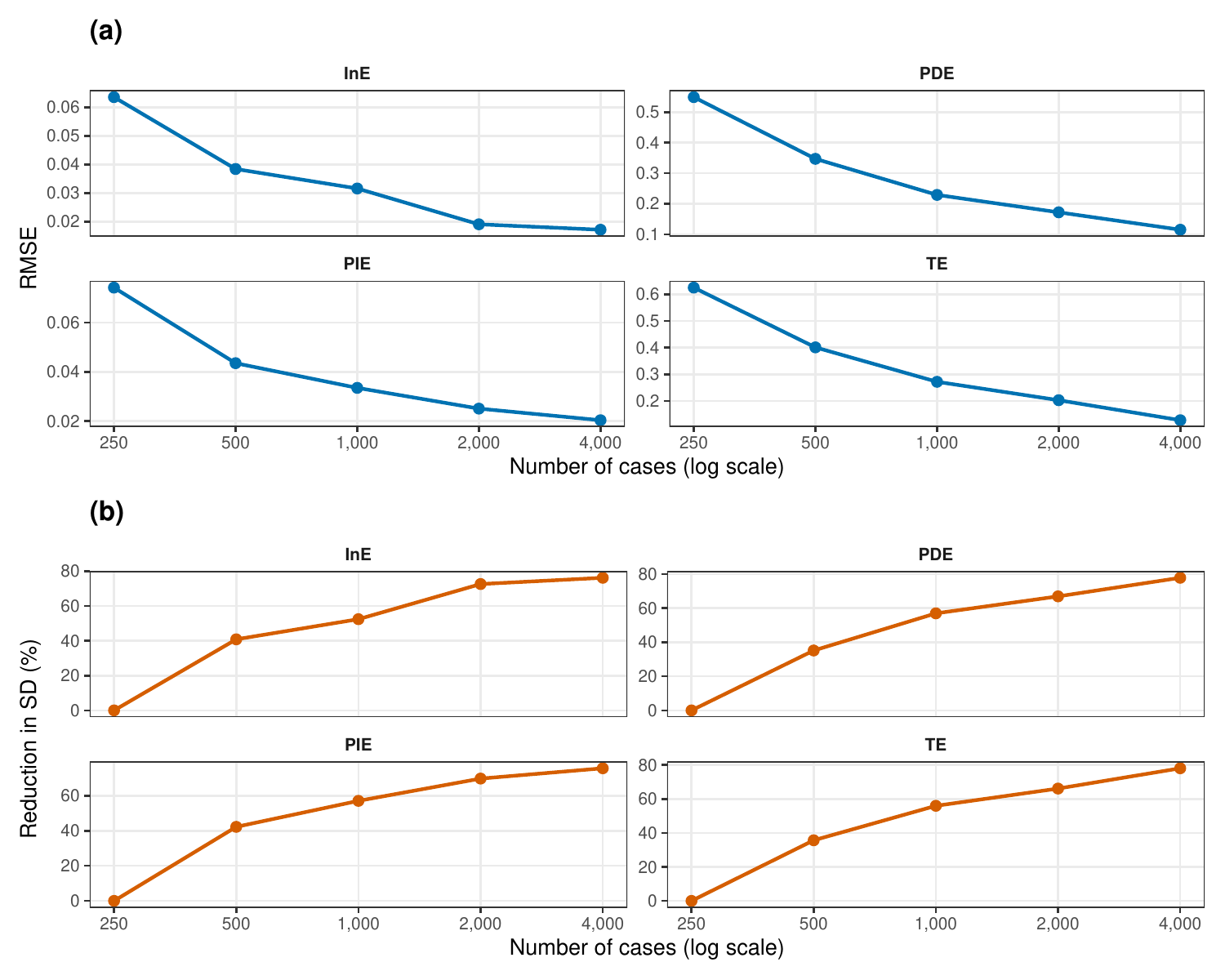}
\caption{Performance of the proposed estimators across increasing numbers of cases. (a) Root mean squared error (RMSE) assessing estimation accuracy. (b) Percentage reduction in standard deviation relative to the smallest sample size (n = 250), quantifying efficiency gains with increasing sample size.\label{Figure_3}}
\end{figure}

Figure \ref{Figure_3}(a) presents the root mean squared error (RMSE) of the proposed estimators across increasing numbers of cases on a logarithmic scale. For all estimands (TE, PDE, PIE, and InE), RMSE decreases monotonically with sample size, indicating improved estimation accuracy as the number of cases increases. The reduction is most pronounced at smaller sample sizes, with diminishing improvements observed at larger sample sizes, consistent with asymptotic convergence behavior. Figure \ref{Figure_3}(b) displays the percentage reduction in standard deviation relative to the smallest sample size (n = 250) for each estimand. A substantial gain in precision is observed as the number of cases increases, with approximately \(35-45\%\) reduction at n = 500 and over \(70\%\) reduction at n = 4000. The pattern is consistent across all estimands, demonstrating strong efficiency gains with increasing sample size followed by diminishing returns at larger sample sizes. This confirms that most efficiency gains are achieved at moderate sample sizes, after which additional increases in sample size yield marginal improvements.

\section{Application to NHANES Data}\label{application-to-nhanes-data}

We applied the proposed three-stage weighting (3S-weighting) framework to data from the 2017--2018 cycle of the National Health and Nutrition Examination Survey (NHANES) to demonstrate its practical utility for causal mediation analysis under retrospective case-control sampling. A primary motivation for this empirical illustration is that many epidemiologic investigations of chronic diseases rely on retrospective sampling designs due to financial, logistical, and time constraints. Although case-control studies are highly efficient for relatively uncommon outcomes, they distort the joint distribution of exposures, mediators, and baseline covariates relative to the target population, thereby complicating identification and estimation of population-level mediation effects.

NHANES is a nationally representative survey conducted by the National Center for Health Statistics (NCHS), Centers for Disease Control and Prevention (CDC), using a complex multistage probability sampling design \citep{nhanes_cdc}. The survey combines questionnaire data, physical examinations, and laboratory assessments for the civilian non-institutionalized U.S. population. We used the 2017--2018 survey cycle because it contains harmonized information across demographic, behavioral, and clinical domains relevant to the present analysis.

Although NHANES is not a case-control study by design, its population-representative structure provides a useful platform for constructing a retrospective sample while simultaneously preserving information on the target population covariate distribution. This setting closely resembles practical epidemiologic applications in which investigators combine retrospectively sampled disease data with auxiliary population-level information obtained from surveillance systems, administrative registries, or national health surveys. Within the proposed framework, the full NHANES cohort is used to estimate the marginal covariate distribution \(P(\mathbf{X})\), whereas estimation of mediation effects is performed on a retrospectively sampled case-control dataset.

The analytic dataset was constructed by linking demographic (DEMO), smoking questionnaire (SMQ), blood pressure questionnaire (BPQ), medical conditions questionnaire (MCQ), and body measures (BMX) components using the respondent identifier \texttt{SEQN}. The outcome variable was a binary indicator of cardiovascular disease (CVD), defined as self-reported physician diagnosis of any of coronary heart disease, myocardial infarction, stroke, or congestive heart failure. This composite endpoint was adopted to improve statistical stability and event prevalence, which is common practice in cardiovascular epidemiology.

The exposure variable was current cigarette smoking status, defined using smoking questionnaire responses indicating both lifetime smoking history and current smoking behavior. The mediator was physician-diagnosed hypertension. Baseline covariates included age, sex, race/ethnicity, and body mass index (BMI), all treated as pre-exposure confounders of the relevant causal pathways.

After exclusion of observations with missing data, the final analytic cohort contained \(5,140\) individuals. The prevalence of cardiovascular disease in the full NHANES cohort was approximately \(11.1\%\), whereas the prevalences of current smoking and hypertension were \(18.2\%\) and \(37.9\%\), respectively.

To emulate a retrospective epidemiologic study, all individuals with cardiovascular disease were retained as cases and an equal number of controls were randomly sampled from individuals without cardiovascular disease, yielding a balanced case-control sample of size \(1,146\). Notably, the resulting retrospective sample represented only approximately \(22.3\%\) of the full NHANES analytic cohort. Nevertheless, the proposed 3S-weighting framework successfully reconstructed the population disease prevalence with high accuracy. Specifically, the estimated prevalence from the retrospective sample was \(\hat{\pi}=0.114\), which closely approximated the true population prevalence of \(0.111\). This result demonstrates the ability of the proposed reconstruction strategy to recover population-level quantities using substantially reduced retrospective samples when external covariate information is available.

Within the sampled case-control dataset, the proposed 3S-weighting framework was implemented in three stages. First, the population prevalence was estimated using density-ratio reconstruction based on external covariate information derived from the full NHANES cohort and a synthetic reference population. Second, stabilized inverse probability of treatment weights were estimated using weighted propensity score models to account for exposure-confounder imbalance. Third, mediator density-ratio weights were constructed to identify the nested counterfactual means required for mediation decomposition. Weighted marginal structural models with log link and robust generalized estimating equations were then fitted to estimate the total effect (TE), pure direct effect (PDE), pure indirect effect (PIE), and interaction effect (InE) on the relative-risk scale.

Table \ref{tab:nhanes_results} summarizes the estimated mediation effects. The estimated total effect of current smoking on cardiovascular disease was $\widehat{RR}^{TE}=1.16$ (95\% CI: \(0.88, 1.54\)), suggesting a positive overall association between smoking and cardiovascular disease risk. The estimated pure direct effect was $\widehat{RR}^{PDE}=1.13$ (95\% CI: \(0.85, 1.50\)), indicating that much of the observed association operated through pathways not mediated by hypertension.

The estimated pure indirect effect through hypertension was comparatively modest ($\widehat{RR}^{PIE}=1.03$, 95\% CI: \(0.69, 1.53\)), suggesting limited evidence of mediation through the hypertension pathway alone. Similarly, the interaction effect estimate was close to the null value ($\widehat{RR}^{InE}=1.00$, 95\% CI: \(0.67, 1.49\)), indicating little evidence of substantial exposure--mediator interaction on the multiplicative scale.

Although the confidence intervals for the pathway-specific effects included the null value, the empirical application primarily serves as a methodological illustration of the proposed framework rather than a definitive substantive epidemiologic analysis. Importantly, the results demonstrate that the proposed 3S-weighting strategy can recover population-level mediation quantities from retrospectively sampled data while integrating external population information to account for outcome-dependent sampling and covariate distributional distortion.

Overall, the NHANES application illustrates the practical feasibility and statistical stability of the proposed framework in realistic epidemiologic settings where full cohort analyses may be impractical, but external population-level information is available to support valid causal mediation inference.

\begin{table}[!htbp]
\centering
\caption{Estimated causal mediation effects for the association between current smoking and cardiovascular disease using the proposed 3S-weighting framework applied to NHANES 2017--2018 data.}
\label{tab:nhanes_results}
\begin{tabular}{lccc}
\toprule
& & \multicolumn{2}{c}{95\% CI} \\
\cmidrule{3-4}
Effect & $\widehat{\text{RR}}$ &  Lower & Upper \\
\midrule
TE  & 1.16 & 0.88 & 1.54 \\
PDE & 1.13 & 0.85 & 1.50 \\
PIE & 1.03 & 0.69 & 1.53 \\
InE & 1.00 & 0.67 & 1.49 \\
\bottomrule
\end{tabular}
\end{table}

It is important to note that in retrospective case-control studies, only odds ratios (ORs) are directly identifiable under standard logistic regression models. For comparison, we fitted a conventional multivariable logistic regression model for cardiovascular disease on smoking status, adjusting for age, sex, race/ethnicity, and body mass index. This analysis yielded an estimated odds ratio of 1.74 (95\% CI: 1.21, 2.50), reflecting the marginal association under the standard regression framework.

In contrast, when applying the proposed 3S-weighting framework with a logit-link marginal structural model, the estimated total effect on the odds ratio scale was 1.18. This estimate is smaller in magnitude than the conventional adjusted logistic regression estimate, reflecting the fact that the proposed method targets a population-level causal estimand by explicitly correcting for outcome-dependent sampling and incorporating external information on the covariate distribution, rather than conditioning on the case-control sampling design.

This attenuation is expected because standard case-control logistic regression estimates a conditional association measure that may be influenced by selection induced by outcome-dependent sampling, whereas the proposed weighting approach reconstructs the marginal population estimand. Consequently, the two estimates are not directly comparable as measures of the same target quantity, but rather reflect different estimands under different identifying assumptions.

\section{Discussion and Conclusion}\label{discussion-and-conclusion}

This article develops a unified three-stage weighting (3S-weighting) framework for causal inference and causal mediation analysis from case--control studies. The proposed approach addresses a fundamental challenge of retrospective sampling: although case--control designs are highly efficient for studying rare outcomes, the resulting data do not preserve the population outcome distribution required for identification of many population-level causal estimands. Consequently, standard causal inference procedures developed for cohort or cross-sectional studies cannot generally be applied directly to case--control data without additional adjustments.

The proposed methodology reconstructs the target population distribution through a sequence of weighted pseudo-populations. The first stage estimates the unknown population outcome prevalence using density-ratio learning and label-shift correction combined with externally available covariate information. The second stage applies prevalence-based design weights to recover the target population distribution from the retrospective sample. The third stage incorporates stabilized causal and mediation weights to estimate total and pathway-specific causal effects using standard marginal structural modeling techniques. By separating prevalence recovery, population reconstruction, and causal effect estimation into distinct but connected stages, the proposed framework provides a transparent and modular strategy for conducting causal analyses under outcome-dependent sampling.

A major contribution of this work is the removal of the commonly imposed assumption that the population outcome prevalence is known. Existing weighting-based approaches for causal inference in case--control studies typically require prevalence information to be specified a priori or obtained from external sources. In many practical settings, however, prevalence estimates are unavailable, measured with substantial uncertainty, or derived from populations that differ from the study population. Failure to account for this uncertainty can propagate bias throughout subsequent causal analyses. By explicitly estimating prevalence within the inferential procedure, the proposed framework broadens the applicability of weighting-based causal methods to settings in which reliable prevalence information is not directly available.

The proposed framework also extends the scope of causal mediation analysis under retrospective sampling. Although substantial methodological progress has been made in causal mediation analysis over the last two decades, most existing approaches for case--control studies rely on rare disease approximations, specialized likelihood formulations, or strong parametric assumptions. Such restrictions can limit practical implementation and complicate interpretation. In contrast, the proposed approach directly targets population-level mediation estimands by first reconstructing the target population and then applying conventional weighting-based mediation methods. This separation of design correction and causal estimation provides a conceptually simple framework that can accommodate a broad class of mediation estimands without requiring disease rarity assumptions.

The simulation studies demonstrate that failure to account for retrospective sampling can produce substantial bias in estimation of both total and mediation effects. In contrast, the proposed 3S-weighting approach consistently reduced bias and improved recovery of target population causal parameters across a range of sampling scenarios. The empirical application further illustrates that the proposed methodology can be implemented using routinely available external covariate information and yields interpretable estimates of both overall and pathway-specific causal effects. Collectively, these findings suggest that careful reconstruction of the target population distribution is essential when drawing causal conclusions from case--control data.

Like all causal inference methods, the proposed approach relies on several assumptions. First, identification requires the standard causal assumptions of consistency, exchangeability, and positivity. Violations of these assumptions may lead to biased estimates regardless of the sampling design. Second, the prevalence estimation stage assumes that the relationship between covariates and outcome is transportable between the case--control sample and the external covariate population used for calibration. If important covariates are omitted or if substantial distributional shifts exist beyond those accommodated by the density-ratio framework, prevalence estimation may be biased. Third, the proposed weighting procedure may exhibit instability when estimated probabilities approach zero or one, a challenge common to inverse-probability weighting methods more generally. In practice, weight stabilization, truncation, and machine-learning-based nuisance estimation may help mitigate these concerns.

Several limitations also warrant discussion. The current framework focuses primarily on binary outcomes and binary treatments. Extension to multi-category, continuous, or time-to-event outcomes represents an important direction for future work. Similarly, although the proposed methodology is formulated for a single mediator, many substantive applications involve multiple interacting mediators or complex longitudinal mediation processes. Extending the framework to accommodate dynamic treatment regimes, longitudinal mediators, and recurrent outcomes would further broaden its applicability. In addition, formal semiparametric efficiency theory for the proposed estimator remains to be developed and may provide insights into optimal estimation strategies and variance reduction.

Another promising avenue for future research concerns uncertainty propagation across the three estimation stages. In the present work, prevalence estimation, population reconstruction, and causal effect estimation are performed sequentially. While the proposed bootstrap procedure captures overall sampling variability, future work could investigate fully joint estimation procedures that explicitly account for uncertainty propagation through all stages of the estimation process. Such developments may improve finite-sample performance and facilitate more efficient inference.

More broadly, this work contributes to the growing literature at the intersection of causal inference and outcome-dependent sampling. Historically, methodological developments in case--control studies and causal inference have largely evolved as separate research areas. The proposed framework helps bridge these traditions by demonstrating how modern weighting-based causal methods can be adapted to retrospective sampling designs through principled reconstruction of the target population distribution. This perspective highlights that the primary obstacle to causal inference in case--control studies is not retrospective sampling per se, but rather the distortion of the population distribution induced by the sampling mechanism.

In conclusion, we propose a general three-stage weighting framework that enables estimation of total and pathway-specific causal effects from case--control data without requiring prior knowledge of the population outcome prevalence. By combining prevalence recovery, population reconstruction, and causal weighting within a unified framework, the proposed method extends the range of causal questions that can be addressed using retrospective studies. Given the widespread use of case--control designs across epidemiology, public health, medicine, and the social sciences, the proposed methodology provides a practical and theoretically grounded tool for strengthening causal inference from outcome-dependent samples.

\section*{Appendix}\label{appendix}
\addcontentsline{toc}{section}{Appendix}

\subsection*{A1. Proof of Lemma \ref{lemma_int}}\label{a1.-proof-of-lemma}
\addcontentsline{toc}{subsection}{A1. Proof of Lemma \ref{lemma_int}}

\begin{proof}
For any integrable function $f(\pmb{X})$,
\begin{align*}
E_{Pop}\big[f(\pmb{X})\big]
&=
\int
f(\pmb{X})
P_{Pop}(\pmb{X})
\, d\pmb{x}
\\
&=
\int
f(\pmb{X})
\frac{
P_{Pop}(\pmb{X})
}{
P_{CC}(\pmb{X})
}
P_{CC}(\pmb{X})
\, d\pmb{x}
\\
&=
\int
r(\pmb{X})
f(\pmb{X})
P_{CC}(\pmb{X})
\, d\pmb{x}
\\
&=
E_{CC}
\big[
r(\pmb{X})f(\pmb{X})
\big].
\end{align*}

Furthermore,
\begin{align*}
E_{CC}\big[r(\pmb{X})\big]
&=
\int
r(\pmb{X})
P_{CC}(\pmb{X})
\, d\pmb{x}
=
\int
\frac{
P_{Pop}(\pmb{X})
}{
P_{CC}(\pmb{X})
}
P_{CC}(\pmb{X})
\, d\pmb{x}
=
\int
P_{Pop}(\pmb{X})
\, d\pmb{x}
=
1.
\end{align*}
\end{proof}

\subsection*{A2. Proof of Lemma \ref{lemma_pi}}\label{a2.-proof-of-lemma}
\addcontentsline{toc}{subsection}{A2. Proof of Lemma \ref{lemma_pi}}

\begin{proof}
Under CC sampling, it is reasonable to assume that the conditional covariate distribution within outcome strata remains unchanged, that is,
\begin{equation}
P_{Pop}(\pmb{X} \mid Y=y)
=
P_{CC}(\pmb{X} \mid Y=y).
\label{eq:conditional_invariance}
\end{equation}

Using Bayes' theorem,
$$\mathrm{odds}_{Pop}(\pmb{X})
=
\frac{
P_{Pop}(Y=1 \mid \pmb{X})
}{
P_{Pop}(Y=0 \mid \pmb{X})
}
=
\frac{
P_{Pop}(\pmb{X} \mid Y=1)
}{
P_{Pop}(\pmb{X} \mid Y=0)
}
\times
\frac{
\pi
}{
1-\pi
}.$$

Similarly,
$$\mathrm{odds}_{CC}(\pmb{X})
=
\frac{
P_{CC}(Y=1 \mid \pmb{X})
}{
P_{CC}(Y=0 \mid \pmb{X})
}
=
\frac{
P_{CC}(\pmb{X} \mid Y=1)
}{
P_{CC}(\pmb{X} \mid Y=0)
}
\times
\frac{
p_{CC}
}{
1-p_{CC}
}.$$

Using \eqref{eq:conditional_invariance},
\begin{align*}
\frac{
\mathrm{odds}_{Pop}(\pmb{X})
}{
\mathrm{odds}_{CC}(\pmb{X})
}
&=
\frac{
\pi/(1-\pi)
}{
p_{CC}/(1-p_{CC})
}\;\Rightarrow\;
\mathrm{odds}_{Pop}(\pmb{X})
=
\alpha(\pi)
\times
\mathrm{odds}_{CC}(\pmb{X}),
\end{align*}
where
$$
\alpha(\pi)
=
\frac{
\pi/(1-\pi)
}{
p_{CC}/(1-p_{CC})
}.
$$
\end{proof}

\subsection*{A3. Proof of Lemma \ref{lemma_mediation}}\label{a3.-proof-of-lemma}
\addcontentsline{toc}{subsection}{A3. Proof of Lemma \ref{lemma_mediation}}

\begin{proof}

Under consistency,
$Y^{a,M^a}=Y^a$.
Therefore,
\begin{align}
E\bigl(Y^{a,M^a}\bigr)
=
  E(Y^a)
=
  E_{\mathbf{X}}
\left[
  E(Y^a\mid \mathbf{X})
  \right].
\label{eq:proof1}
\end{align}

By conditional exchangeability,
\[
  E(Y^a\mid \mathbf{X})
  =
    E(Y\mid A=a,\mathbf{X}),
  \]
so that
\begin{align*}
E\bigl(Y^{a,M^a}\bigr)
&=
  E_{\mathbf{X}}
\left[
  E(Y\mid A=a,\mathbf{X})
  \right]
=
  E\left[
    \frac{
      Y\,I(A=a)
    }{
      P(A=a\mid \mathbf{X})
    }
    \right].
\end{align*}

This establishes Equation \eqref{eq:lemma_te}.

For nested counterfactual means,
\begin{align}
E\bigl(Y^{a,M^{a^\star}}\bigr)
&=
  E_{\mathbf{X}}
\left[
  \sum_m
  E(Y^{a,m}\mid \mathbf{X})
  f(M^{a^\star}=m\mid \mathbf{X})
  \right].
\label{eq:proof3}
\end{align}

Under sequential ignorability and consistency,
\[\begin{aligned}
  E(Y^{a,m}\mid \mathbf{X})
  &=
    E(Y\mid A=a,M=m,\mathbf{X}), \;and\;\\
  f(M^{a^\star}=m\mid \mathbf{X})
  &=
    f(M=m\mid A=a^\star,\mathbf{X}).
  \end{aligned}\]

Substituting these expressions into Equation \eqref{eq:proof3} yields
\begin{align*}
E\bigl(Y^{a,M^{a^\star}}\bigr)
&=
  E_{\mathbf{X}}
\left[
  \sum_m
  E(Y\mid A=a,M=m,\mathbf{X})
  f(M=m\mid A=a^\star,\mathbf{X})
  \right].
\end{align*}

Multiplying and dividing by
$f(M=m\mid A=a,\mathbf{X}),$
we obtain
\begin{align*}
E\bigl(Y^{a,M^{a^\star}}\bigr)
&=
  E_{\mathbf{C}}
\left[
  \sum_m
  E(Y\mid A=a,M=m,\mathbf{X})
  \right.
  \nonumber\\
  &\qquad\qquad\left.
  \times
  \frac{
    f(M=m\mid A=a^\star,\mathbf{X})
  }{
    f(M=m\mid A=a,\mathbf{X})
  }
  f(M=m\mid A=a,\mathbf{X})
  \right].
\end{align*}

Recognizing the resulting observed-data distribution gives
\begin{align*}
E\bigl(Y^{a,M^{a^\star}}\bigr)
&=
  E\left[
    \frac{
      Y\,I(A=a)
    }{
      P(A=a\mid \mathbf{X})
    }
    \,
    \frac{
      f(M\mid A=a^\star,\mathbf{X})
    }{
      f(M\mid A=a,\mathbf{X})
    }
    \right].
\end{align*}

This establishes Equation \eqref{eq:lemma_nested} and directly motivates the stabilized mediation weights used for estimation of the pathway-specific causal effects.

\end{proof}

%\renewcommand\refname{Bibliography}
% \bibliography{bibfile.bib}

\end{document}